\newcommand{\Fig}[1]{Fig.~\textup{\ref{#1}}}
\newtheorem{lemma}{Lemma}
\newtheorem{theorem}{Theorem}
\newtheorem{corollary}{Corollary}
\newtheorem{remark}{Remark}
\newcommand{\F}{\mathbb{F}}
\begin{document}

\title{On the Multiple Threshold Decoding of LDPC codes over GF(q)}

\author{
  \IEEEauthorblockN{Alexey Frolov and Victor Zyablov}
	
  \IEEEauthorblockA{\small Inst. for Information Transmission Problems\\
    Russian Academy of Sciences\\Moscow, Russia\\
    Email: \{alexey.frolov, zyablov\}@iitp.ru
  }
}


%


\maketitle

\begin{abstract}
We consider the decoding of LDPC codes over $GF(q)$ with the low-complexity majority algorithm from \cite{FZ}. A modification of this algorithm with multiple thresholds is suggested. A lower estimate on the decoding radius realized by the new algorithm is derived. The estimate is shown to be better than the estimate for a single threshold majority decoder. At the same time the transition to multiple thresholds does not affect the order of complexity.  
\end{abstract}

\section{Introduction}
In this paper we consider the decoding of LDPC codes \cite{G, T} over $\F_q$ with the low-complexity majority algorithm from \cite{FZ}. In \cite[Theorem~1]{FZ} a lower estimate on the relative decoding radius $\rho$ realized by the low-complexity majority algorithm is derived. Let us describe the result in more detail.  Let $N$ denote the code length. In \cite{FZ} it is proved that there exist LDPC codes over $\F_q$ (with probability $p_N: \lim\nolimits_{N \to \infty} p_N \to 1$) capable of correcting any error vector of weight\footnote{Here and in what follows by weight we mean the Hamming weight, i.e. a number of non-zero elements in a vector.} $W \leq \rho N$ with the decoding complexity $O (N \log N)$. We first improve the estimate on $\rho$.

Then we consider multiple threshold decoding of LDPC codes over $\F_q$. Multiple threshold majority decoding for binary LDPC codes was first introduced in \cite{K}. In \cite{K} it was shown that transition to multiple thresholds increases the decoding radius of the majority algorithm (in the binary case the algorithm is usually called bit-flipping algorithm \cite{ZP, SS}) without affecting the order of complexity. In this paper we generalize the ideas of \cite{K} to the case of non-binary LDPC codes.

Our contribution is as follows. We first improve the estimate on the relative decoding radius $\rho$ for the single threshold case. Then we suggest the majority decoding algorithm with multiple thresholds for LDPC codes over $\F_q$. A lower estimate on the decoding radius realized by the new algorithm is derived. The estimate is shown to be at least $1.21$ times better than the estimate for a single threshold majority decoder. At the same time analogously the result from \cite{K} the transition to multiple thresholds does not affect the order of complexity. 

\section{Preliminaries}

Let us consider the construction of LDPC code $\mathcal{C}$ over $\F_q$. To construct such a code we use a bipartite graph, which is called the Tanner graph \cite{T} (see \Fig{tanner}). The graph consists of $N$  variable nodes  $v_1, v_2, \ldots, v_N$ and $M$ check nodes $c_1, c_2, \ldots, c_M$. We assume all the check nodes to have the same degree $n_0$ and all the variable nodes to have the same degree $\ell$. Such Tanner graphs are called regular ones. We associate constituent codes to each of the check nodes. All the constituent codes are the same (we denote the constituent code by $\mathcal{C}_0$). We assume $\mathcal{C}_0$ to be a linear $[n_0, R_0, d_0]$-code over $\F_q$. Let us denote the parity-check matrix of the constituent codes by $\mathbf{H}_0$. The matrix has size $m_0 \times n_0$, where $m_0 = (1-R_0)n_0$.

\begin{figure}[htbp]
\centering
\includegraphics[width=0.4\textwidth]{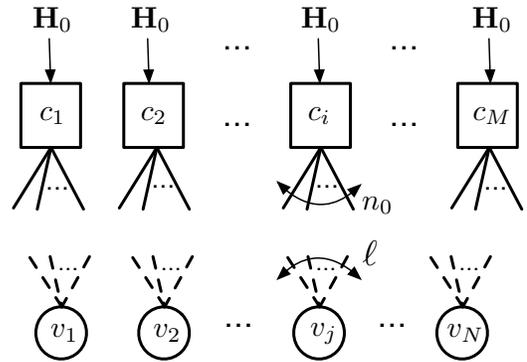}
\caption{Tanner graph}
\label{tanner}
\end{figure}

To check if $\mathbf{r} = (r_1, r_2, \ldots, r_N) \in \F_q^N$ is a codeword of $\mathcal{C}$ we associate the symbols of $\mathbf{r}$ to the variable nodes ($v_i = r_i, i = 1,\ldots, N$). The word $\mathbf{r}$ is called a codeword of $\mathcal{C}$ if all the constituent codes are satisfied (the symbols which come to the codes via the edges of the Tanner graph form codewords of the constituent codes).

It is clear the resulting code $\mathcal{C}$ is linear, so it has a parity-check matrix associated to it. We denote the matrix by $\mathbf{H}$. The code is over $\F_q$ and has the length $N$. The following inequality follows for the rate of the code $\mathcal{C}$
\[
R(\mathcal{C}) \geq 1 - \ell(1 - R_0).
\]

In what follows for the simplicity we consider only the case when the constituent code is an $[n_0, n_0-1]$ single parity-check (SPC) code over $\F_q$. The generalization to the case of a stronger constituent code is simple. It will be briefly explained in Remark~\ref{remark::gen_ldpc}.

As usually we calculate the syndrome of the sequence $\mathbf{r} = (r_1, r_2, \ldots, r_N) \in \F_q^N$ to be decoded as follows
\[
\mathbf{S} = \mathbf{H} \mathbf{r}^T.
\]

In \cite[Theorem~2]{FZ} it is proved that there exist LDPC codes over $\F_q$ (with probability $p_N: \lim\limits_{N \to \infty} p_N \to 1$) such that the following inequality holds for the syndrome weight
\begin{equation}\label{eq::syndrome}
|\mathbf{S}| > L(W) = \frac{W \ell}{2}
\end{equation}
for all error vectors of weight $W \leq W^*(R, \ell) = \omega^*(R, \ell) N$. 

To prove Theorem~2 in \cite{FZ} a Gallager-like ensemble of LDPC codes was used. The only difference to the binary case was in multiplication of the parity-check matrix columns by non-zero elements from $\F_q$. In what follows we do not need the ensemble, so we omit the definition of the ensemble here. In what follows we need just an LDPC code over $\F_q$ which satisfies the property (\ref{eq::syndrome}). We denote the code by $\mathcal{C}^*$.

We note, that at the same time the following trivial upper bound on the syndrome weight holds
\[
|\mathbf{S}| \leq U(W) = {W \ell}.
\]

\section{Single threshold majority decoding algorithm}

Let us describe a single-threshold majority decoding algorithm from \cite{FZ}. See Algorithm~\ref{alg::majority} for full description, here we give some comments and explanations. The algorithm is an iterative hard decision decoding algorithm. On each iteration the algorithm checks all the symbols from the sequence to be decoded ($\mathbf{r} = (r_1, r_2, \ldots, r_N)$). For each of the symbols the replacement criterion (see below) is checked. If the symbol satisfies the criterion, then its value is replaced with a new value, syndrome is updated and the algorithm continues with the next symbol.

\begin{remark}
It is important to note, that the algorithm works with the symbols consequently. This means, that in case of replacement all the changes are introduced to the sequence to be decoded and to the syndrome and then the algorithm goes to the next symbol.
\end{remark}

Now let us consider the replacement criterion. Assume the algorithm is considering the symbol $r_i$. The corresponding variable node $v_i$ is connected to $\ell$ constituent codes. Each of these codes sends a message to $v_i$ calculated based on values of another variable nodes connected to it (usual message passing rule). So we have $\ell$ messages coming to $v_i$. Let $A_{\max}$ denote a subset of equal non-zero messages of maximal cardinality, let $a = |A_{\max}|$ and $v$ be a value of the messages from $A_{\max}$. Let a threshold $\theta$ be an integer such that  $0 \leq \theta < \ell$. At last let $z$ be a number of zero messages. The replacement criterion is as follows. If $a - z > \theta$ we replace the symbol $r_i$ with $v$.

\begin{remark}
Note, that within the section $\theta = 0$, we introduced the parameter here just for our convenience. We will use it in the next section. 
\end{remark}

And the last thing we have not mention yet is a stopping criterion. We stop the algorithm if no changes in $\mathbf{r}$ were made during the iteration.   

\begin{algorithm}\label{alg::majority}
\caption{Single threshold majority decoding algorithm}
\begin{algorithmic}
\State {\bf Input:} received sequence $\mathbf{r}$, threshold $\theta: 0 \leq \theta < \ell$  
\State {\bf Output:} decoded sequence $\mathbf{c}$, failure flag $F$
\State {\bf Initialization:} $\mathbf{S} \gets \mathbf{H} \mathbf{r}^T$; $b \gets 1$

\While {$b = 1$}
\State $b \gets 0$
\ForAll {$1 \leq i \leq N$}
	\State calculate $\ell$ messages for $r_i$
	\State $A_{\max} \gets$ maximal subset of equal non-zero messages 
	\State $a \gets |A_{\max}|$; $v \gets$ value from $A_{\max}$
	\State $z \gets$ number of zero messages
	\If {$a - z > \theta$} 
		\State $r_i \gets v$
		\State update $\mathbf{S}$
		\State $b \gets 1$
	\EndIf
\EndFor
\EndWhile
\State $F \gets 1$
\State $\mathbf{c} \gets \mathbf{r}$
\If {$|\mathbf{S}| = 0$}
	\State $F \gets 0$
\EndIf
\end{algorithmic}
\end{algorithm}

\begin{lemma}[{\cite[Theorem 3]{FZ}}]\label{lemma::replace}
Let
\[
|\mathbf{S}| > \frac{W \ell}{2}
\]
then there exist a symbol whose replacement leads to the syndrome weight reduction (at least by $1$).
\end{lemma}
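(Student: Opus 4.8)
The plan is to reduce the statement to a purely combinatorial counting problem on the Tanner graph and then average over the error positions. First I would fix the current error vector $\mathbf{e}$ (the difference between $\mathbf{r}$ and the transmitted codeword), write $E$ for its support and $W=|E|$. Since every constituent code is an SPC code, each check $c$ carries a single syndrome symbol $s_c=\sum_{j\in c}h_j e_j$, and $c$ is satisfied iff $s_c=0$; thus $|\mathbf{S}|$ is exactly the number of unsatisfied checks. For a variable node $v_i$ the message from a neighbouring check $c$ is the value $v_i$ would need to satisfy $c$, i.e.\ the correction $-h_i^{-1}s_c$. Hence a message is \emph{zero} precisely when $c$ is satisfied, and a non-zero message of value $v$ means flipping $r_i$ by $v$ would satisfy $c$. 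Consequently, if we flip $r_i$ to the majority value $v$ of $A_{\max}$, the $a=a_i$ checks agreeing on $v$ become satisfied while the $z=z_i$ currently satisfied checks become unsatisfied (and all other unsatisfied checks stay unsatisfied), so the syndrome weight changes by exactly $-(a_i-z_i)$. It therefore suffices to exhibit one position with $a_i-z_i\ge 1$.

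Because $a_i-z_i$ is an integer, it is enough to prove $\sum_{i\in E}(a_i-z_i)>0$: then not all summands are $\le 0$, so some error position satisfies $a_i-z_i\ge 1$. I would bound $\sum a_i$ from below and $\sum z_i$ from above separately. Call a check neighbouring $v_i$ \emph{clean} if $i$ is its only erroneous position and \emph{dirty} otherwise, and let $d_i$ be the number of dirty neighbours of $v_i$. A clean neighbour $c$ of an error position $i$ has $s_c=h_ie_i\ne 0$, so it is unsatisfied and its message equals the \emph{correct} non-zero value $-e_i$; the $\ell-d_i$ clean neighbours thus all vote for the same non-zero value, giving $a_i\ge \ell-d_i$.

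The crux is to tie these local quantities to $|\mathbf{S}|$ by a global edge count. Writing $k_c$ for the number of error positions in check $c$, a clean check is exactly a check with $k_c=1$; since each such check is counted by a unique error position, $\sum_{i\in E}(\ell-d_i)=T_1$, where $T_1$ is the number of checks with $k_c=1$. Next, with $u_i=\ell-z_i$ the number of unsatisfied neighbours of $v_i$, double counting edges from unsatisfied checks gives $\sum_{i\in E}u_i=\sum_{c:\,s_c\ne 0}k_c$. Every $k_c=1$ check is unsatisfied and contributes $1$, while every unsatisfied check with $k_c\ge 2$ contributes at least $2$; hence $\sum_{i\in E}u_i\ge T_1+2S_{\ge 2}$, where $S_{\ge 2}$ counts unsatisfied checks with $k_c\ge 2$. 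Combining $\sum a_i\ge T_1$ with $\sum z_i=W\ell-\sum u_i\le W\ell-T_1-2S_{\ge 2}$ and using $|\mathbf{S}|=T_1+S_{\ge 2}$, I obtain
\[
\sum_{i\in E}(a_i-z_i)\;\ge\;2T_1+2S_{\ge 2}-W\ell\;=\;2|\mathbf{S}|-W\ell\;>\;0,
\]
the last inequality being the hypothesis $|\mathbf{S}|>W\ell/2$. This produces the desired position.

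The step I expect to be the main obstacle is the global count, not the local analysis: the naive per-position bound $a_i-z_i\ge \ell-2d_i$, averaged over $E$, only yields $2T_1-W\ell$, which need not be positive. The gain comes from \emph{not} discarding the unsatisfied multi-error checks: each of them contributes at least $2$ to $\sum u_i$, and it is precisely this ``$+2S_{\ge 2}$'' term that upgrades $T_1$ to the full syndrome weight $|\mathbf{S}|$ and closes the argument. A secondary point to verify carefully is the message interpretation for the SPC constituent code (zero message $\Leftrightarrow$ satisfied check, value $-e_i$ for a clean neighbour); once this is pinned down, the generalization promised in Remark~\ref{remark::gen_ldpc} to a stronger constituent code should amount to replacing ``clean check'' by ``check in which the restricted error is correctable'' and then run the same counting.
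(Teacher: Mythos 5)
Your proof is correct and is essentially the paper's own argument: the paper defers Lemma~\ref{lemma::replace} to the general-$\theta$ lemma of the next section, whose $\theta=0$ instance is exactly your computation, with your $T_1$, $S_{\geq 2}$, $z_i$ and $\ell-d_i$ corresponding to the paper's $a_1$, $a_{\geq 2}$, $e^{(i)}_C$ and $e^{(i)}_{A_1}$. Both proofs use the same three steps --- the local sufficient condition (clean neighbours all vote for the correct correction), the averaging over error positions, and the global edge count $W\ell \geq a_1 + 2a_{\geq 2} + \sum_i e^{(i)}_C$ yielding $2|\mathbf{S}| - W\ell$.
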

\begin{proof}
A more general proof will be given in the next section. 
\end{proof}

\begin{theorem}[{\cite[Theorem 4]{FZ}}]
Let $\mathcal{C}^*$ be an LDPC code over $\F_q$, satisfying {\textup(\ref{eq::syndrome})}. If the number of errors in the received sequence 
\[
W \leq W^*/2,
\]
the Algorithm~$1$ (with $\theta = 0$) will correct all the errors with the complexity $O (N \log N)$.  
\end{theorem}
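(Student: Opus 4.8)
The plan is to run the two guarantees we already have in tandem: the syndrome lower bound (\ref{eq::syndrome}), valid whenever the current error weight does not exceed $W^*$, and Lemma~\ref{lemma::replace}, which turns a positive syndrome into the existence of a symbol the algorithm is allowed to flip. I would track the ``current error vector'' $\mathbf{e}$, the difference between the working word $\mathbf{r}$ and the transmitted codeword, and denote its weight by $W_{\mathrm{cur}}$; initially $W_{\mathrm{cur}} = W \le W^*/2$.

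First I would establish the one local fact the whole argument rests on: every replacement actually performed by Algorithm~1 strictly decreases $|\mathbf{S}|$. A direct computation at the flipped node $v_i$ shows that setting $r_i \gets v$ turns the $a$ previously violated checks that agree on $v$ into satisfied ones and the $z$ locally satisfied checks into violated ones, while every other incident check keeps its status; hence $|\mathbf{S}|$ changes by exactly $-(a-z)$, which is $\le -1$ precisely because the replacement criterion is $a-z>\theta=0$. (This is the computation underlying Lemma~\ref{lemma::replace}.) Consequently $|\mathbf{S}|$ is strictly decreasing along the run, so $|\mathbf{S}| \le |\mathbf{S}_0| \le U(W) = W\ell \le (W^*/2)\ell$ at all times.

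The main obstacle --- and the step where the factor $1/2$ in the hypothesis is spent --- is the invariant $W_{\mathrm{cur}} \le W^*$, which keeps (\ref{eq::syndrome}) applicable; note that a single replacement may corrupt a correct symbol and raise $W_{\mathrm{cur}}$ by one, so this is not automatic. I would argue by contradiction: if the weight ever exceeded $W^*$ then, since it moves by at most one per replacement, at the step just before the crossing it would equal $\lfloor W^*\rfloor$, and (\ref{eq::syndrome}) would force $|\mathbf{S}| > \lfloor W^*\rfloor \ell/2$. But $\lfloor W^*\rfloor \ell /2 \ge \lfloor W^*/2\rfloor \ell \ge W\ell \ge |\mathbf{S}_0| \ge |\mathbf{S}|$, a contradiction. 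Thus the monotonically shrinking syndrome can never support an error weight near $W^*$, and the trajectory stays strictly inside the region where (\ref{eq::syndrome}) holds.

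With the invariant in hand the rest is short. Whenever $|\mathbf{S}|>0$ we have $W_{\mathrm{cur}} \le W^*$, so (\ref{eq::syndrome}) gives $|\mathbf{S}| > W_{\mathrm{cur}}\ell/2$ and Lemma~\ref{lemma::replace} produces a symbol meeting the criterion; since a terminating pass leaves $\mathbf{r}$ unchanged, a positive final syndrome would expose such a symbol that this very pass should have flipped, a contradiction, so the loop can only stop at $|\mathbf{S}|=0$, which it reaches after at most $|\mathbf{S}_0| = O(N)$ replacements. Finally $|\mathbf{S}|=0$ together with $W_{\mathrm{cur}} \le W^*$ forces $W_{\mathrm{cur}}=0$ by (\ref{eq::syndrome}) itself (a positive weight would give a positive syndrome), i.e.\ all errors are corrected. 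For the complexity I would bound the number of replacements by $|\mathbf{S}_0| = O(N)$, maintain $\mathbf{S}$ incrementally and re-examine only the constantly many neighbours of a flipped node through a worklist so the graph-level work stays $O(N)$, and charge $O(\log N)$ for the $\F_q$ arithmetic and bookkeeping exactly as in \cite{FZ}, giving $O(N\log N)$ overall.
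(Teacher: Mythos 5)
Your proof is correct and follows essentially the same route as the paper: the paper itself defers this theorem to \cite{FZ}, but your monotone-syndrome invariant (the current error weight can never reach $W^*$ because $|\mathbf{S}|$ never exceeds its initial value $U(W)=W\ell\le \lfloor W^*\rfloor\ell/2$, whereas (\ref{eq::syndrome}) would force it strictly above that level) is exactly the algebraic form of the trajectory argument in the ``syndrome weight -- number of errors'' plane that the paper uses to prove the refined single-threshold theorem, here specialized to the horizontal cut giving the point $W^*/2$. The local computation that an accepted replacement changes $|\mathbf{S}|$ by $-(a-z)\le -1$, the termination argument via Lemma~\ref{lemma::replace}, and the $O(N\log N)$ accounting likewise match the treatment in \cite{FZ}.
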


Here we refine the result of the previous theorem
\begin{theorem}[Single threshold]
Let $\mathcal{C}^*$ be an LDPC code over $\F_q$, satisfying {\textup(\ref{eq::syndrome})}. If the number of errors in the received sequence 
\[
W \leq W^{(S)} = \frac{W^*}{2} \frac{\ell+2}{\ell+1},
\]
the Algorithm~$1$ (with $\theta = 0$) will correct all the errors with the complexity $O (N \log N)$.
\end{theorem}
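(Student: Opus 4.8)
The plan is to follow the error pattern and the syndrome jointly as the algorithm rewrites one coordinate at a time, indexing the successive replacements by $t$ and writing $W_t$ for the number of errors (the Hamming distance of the current word from the transmitted codeword) and $|\mathbf{S}_t|$ for the syndrome weight after $t$ replacements. The whole theorem reduces to one invariant: if $W\le W^{(S)}$ then $W_t$ never reaches $W^*$. Once that is secured, property~(\ref{eq::syndrome}) stays applicable for the entire run, Lemma~\ref{lemma::replace} keeps supplying syndrome-reducing moves, and convergence to the correct codeword follows.

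First I would isolate three elementary facts. (i) At the outset the word carries $W$ errors, so the trivial upper bound gives $|\mathbf{S}_0|\le U(W)=W\ell$. (ii) The replacement criterion $a-z>\theta=0$ is exactly the condition that changing $r_i$ to $v$ turns $a$ violated checks into satisfied ones while spoiling at most $z$ currently satisfied checks, so each performed replacement lowers the syndrome weight by at least $a-z\ge 1$ (this is the mechanism behind Lemma~\ref{lemma::replace}); hence $|\mathbf{S}_t|\le|\mathbf{S}_0|-t\le W\ell-t$. (iii) A replacement touches a single coordinate, so $W_t$ changes by at most $1$ per step; consequently raising the error count from $W$ to any value $w$ costs at least $w-W$ replacements.

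The crux --- and the source of the improvement over the bound $W^*/2$ --- is to play (ii) against (iii). Suppose, for contradiction, that the error count ever attains $W^*$, and let $t$ be the first step with $W_t=W^*$; this step exists because $W_0=W<W^*$ and $W_t$ increases in unit steps. By (iii), $t\ge W^*-W$, so (ii) yields
\[
|\mathbf{S}_t|\le W\ell-(W^*-W).
\]
Since $W_t=W^*\le W^*$, property~(\ref{eq::syndrome}) is in force at this very instant and gives $|\mathbf{S}_t|>W^*\ell/2$. Combining the two bounds and clearing denominators leaves $W^*\,\frac{\ell+2}{2}<W(\ell+1)$, i.e. $W>\frac{W^*}{2}\,\frac{\ell+2}{\ell+1}=W^{(S)}$. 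Contrapositively, $W\le W^{(S)}$ forces $W_t<W^*$ for all $t$, which is the desired invariant.

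The rest is routine. While the syndrome is nonzero the current error count is below $W^*$, so property~(\ref{eq::syndrome}) guarantees $|\mathbf{S}_t|>W_t\ell/2$ and Lemma~\ref{lemma::replace} produces a coordinate whose replacement drops $|\mathbf{S}|$ by at least $1$; as $|\mathbf{S}|$ is a nonnegative integer it must hit $0$ after finitely many steps. Property~(\ref{eq::syndrome}) also forbids any nonzero codeword of weight at most $W^*$ (such a vector would have zero syndrome, contradicting the strict inequality), so the minimum distance exceeds $W^*$; a zero syndrome therefore means the current word is a codeword at distance $W_t<W^*$ from the transmitted one, hence equal to it, and all errors are corrected. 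Because the algorithm and the per-symbol cost are unchanged and the total number of replacements is still $O(|\mathbf{S}_0|)=O(N)$, the complexity estimate $O(N\log N)$ of \cite[Theorem~4]{FZ} carries over verbatim. I expect the only genuinely delicate point to be the bookkeeping in the contradiction step: the lower bound of (\ref{eq::syndrome}) must be invoked precisely at the first instant $W_t=W^*$, where the hypothesis $W_t\le W^*$ still holds, and not after the count has already slipped past $W^*$.
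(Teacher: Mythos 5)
Your proof is correct and is essentially the paper's argument in algebraic form: the paper draws the same worst-case trajectory (syndrome down by at least $1$, errors up by at most $1$ per replacement) in the ``number of errors -- syndrome weight'' plane and obtains $W^{(S)}$ as the intersection of the line of slope $-1$ through $\bigl(W^*, L(W^*)\bigr)$ with $U(W)=W\ell$, which is exactly your contradiction inequality $W^*\ell/2 < W\ell-(W^*-W)$. Your explicit handling of the first instant $W_t=W^*$ and of why zero syndrome implies correct decoding only makes precise what the paper leaves to the picture.
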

\begin{proof}
To prove the theorem we need to prove that the number of errors at each step of the algorithm is less or equal to $W^*$ (see condition~(\ref{eq::syndrome}) and Lemma~\ref{lemma::replace}). 

Any error vector can mapped to a point of the following coordinate system: ``syndrome weight -- number of errors'' (see \Fig{fig::ldpc_thr_s}). At the same time it is clear, that each point in the coordinate system corresponds to multiple error vectors. First, let us add the lines $L(W)$ and $U(W)$ to \Fig{fig::ldpc_thr_s}. Recall, that the syndrome weight of any error vector with $W \leq W^*$ satisfies the inequality
\[
L(W) < |\mathbf{S}| \leq U(W).
\]

\begin{figure}[htbp]
\centering
\includegraphics[width=0.4\textwidth]{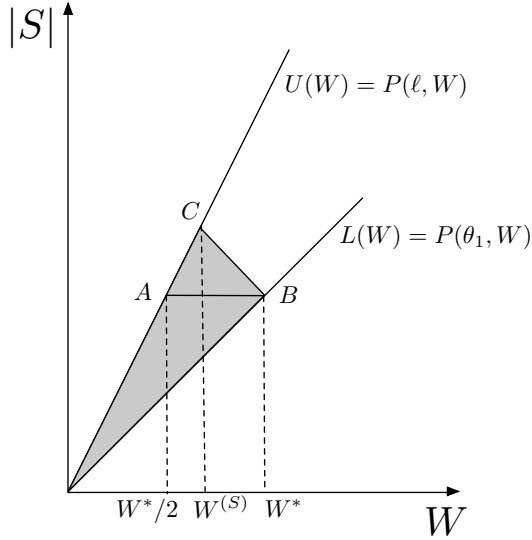}
\caption{Single threshold}
\label{fig::ldpc_thr_s}
\end{figure}

Let us consider the decoding process. It corresponds to some trajectory in the coordinate system. We start from the initial error vector. With each replacement the syndrome weight decreases (we move down at least by $1$) and the number of errors increases (we can introduce errors) or decreases by $1$ (so we move right or left by $1$). The decoding is successful if we finish at the origin. 

The area of correctable error vectors is filled by gray color in \Fig{fig::ldpc_thr_s}. Let us explain this fact. Assume we start from the point C (see \Fig{fig::ldpc_thr_s}) and only introduce errors. In this situation we move right and down by $1$ with each step (move along the line CB). We can not come to the point B as it lies on the (strict) lower bound $L(W)$ so it is clear that the number of errors can not become greater than $W^*$. In this case the decoding (and the trajectory) finishes at origin. To finish the proof we just need to calculate the coordinate of intersection of two lines: $U(W)$ and CB (starts in $W^*$ and has a slope equal to $-1$). The previous estimate ($W^*/2$, point A) is also shown in \Fig{fig::ldpc_thr_s}. 

The proof of the complexity estimate coincides with the proof from \cite{FZ}. We omit it here.
\end{proof}

\begin{corollary}
Let us introduce a notation
\[
\alpha^{(S)} = \frac{\ell+2}{2(\ell+1)} 
\]
and consider the asymptotic ($N \to \infty$) estimate of the relative decoding radius realized by Algorithm~$1$. We have
\[
\rho^{(S)} \geq \frac{W^{(S)}}{N} = \alpha^{(S)}  \omega^*.
\]
\end{corollary}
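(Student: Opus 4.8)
The plan is to read the result off directly from Theorem~2 (Single threshold) together with the normalization $W^* = \omega^* N$ introduced in Section~II; no new decoding argument is needed. First I would recall the meaning of the relative decoding radius: $\rho^{(S)}$ is the largest fraction of coordinates that Algorithm~1 is guaranteed to decode correctly. Since Theorem~2 asserts that the algorithm corrects \emph{every} error vector of weight $W \le W^{(S)}$, the fraction $W^{(S)}/N$ is a guaranteed correctable relative weight, and therefore $\rho^{(S)} \ge W^{(S)}/N$. This yields the inequality in the statement immediately.

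Second, I would evaluate the right-hand side in closed form. Substituting $W^{(S)} = \tfrac{W^*}{2}\tfrac{\ell+2}{\ell+1}$ from Theorem~2 and using $W^* = \omega^* N$, one obtains
\[
\frac{W^{(S)}}{N} = \frac{1}{2}\,\frac{\ell+2}{\ell+1}\,\frac{W^*}{N} = \frac{\ell+2}{2(\ell+1)}\,\omega^* = \alpha^{(S)}\,\omega^*,
\]
which is exactly the claimed expression once we set $\alpha^{(S)} = \tfrac{\ell+2}{2(\ell+1)}$.

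The only point deserving a word is the asymptotic ($N\to\infty$) phrasing. Because $W^*$, and hence $W^{(S)}$, grows linearly in $N$, the ratio $W^{(S)}/N$ is in fact independent of $N$ and equals $\alpha^{(S)}\omega^*$ exactly, so the limit is trivial; the $N\to\infty$ language is just what makes $\omega^*$ a well-defined relative quantity. Consequently there is no real obstacle in this corollary: all the mathematical content sits in Theorem~2, and the statement is merely its restatement in relative, rate-normalized form.
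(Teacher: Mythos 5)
Your proposal is correct and matches what the paper does: the corollary is an immediate restatement of Theorem~2 using $W^* = \omega^* N$, and the paper gives no further argument beyond this substitution. Your added remark that the ratio $W^{(S)}/N$ is actually independent of $N$ is a fair clarification but does not change the substance.
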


In the next section we will increase the estimate by means of transition to multiple decoding thresholds.

\section{Decoding with multiple thresholds}

Let us first introduce the sequence of integer thresholds (let $t \geq 1$)
\[
0 = \theta_1 < \theta_2 < \ldots < \theta_t < \ell. 
\]

Now we are ready to describe the multiple threshold decoding algorithm. The idea of the new algorithm is in consequent applying the Algorithm~$1$ with different replacement thresholds to the sequence to be decoded. We start from the largest threshold $\theta_t$ and end with $\theta_1 = 0$. Please see Algorithm~$2$ full description below for more details.

\begin{algorithm}\label{alg::majority_m}
\caption{Multiple threshold majority decoding algorithm}
\begin{algorithmic}
\State {\bf Input:} received sequence $\mathbf{r}$, $t$ thresholds $0 = \theta_1 < \theta_2 < \ldots < \theta_t < \ell$  
\State {\bf Output:} decoded sequence $\mathbf{c}$, failure flag $F$
\State {\bf Initialization:} $\mathbf{S} \gets \mathbf{H} \mathbf{r}^T$

\ForAll {$0 \leq i \leq t-1$}
	\State Apply Algorithm~$1$ with $\theta = \theta_{t-i}$
	\State $\mathbf{r} \gets$ output of Algorithm~$1$
\EndFor

\State $F \gets 1$
\State $\mathbf{c} \gets \mathbf{r}$
\If {$|\mathbf{S}| = 0$}
	\State $F \gets 0$
\EndIf
\end{algorithmic}
\end{algorithm}

\begin{remark}
We note, that the implementation of the Algorithm~$2$ is not optimal. It is much better to implement it in such a way. First calculate the syndrome, then sort all the symbols in a descending order of $a - z$ value (see previous section), then change the symbols consequently and update the sorted list. But nevertheless we see here that the complexity of Algorithm~$2$ is no more than $t$ times the complexity of Algorithm~$1$. So the order of complexity is $O(N \log N)$.    
\end{remark}

To estimate the decoding radius of the Algorithm~$2$ we need the following Lemma.

\begin{lemma}
Let $\theta$ be an integer, $0 \leq \theta < \ell$, let
\[
|\mathbf{S}| > P(\theta, W) = W \frac{\ell + \theta}{2}
\]
then there exist a symbol whose replacement leads to the syndrome weight reduction by at least by $\theta + 1$.
\end{lemma}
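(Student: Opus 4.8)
The plan is to prove the statement by an averaging argument over the current error positions. I would fix the error vector $\mathbf{e}$ of weight $W$ describing the difference between the current sequence and the target codeword, and use that for SPC constituent codes $|\mathbf{S}|$ is simply the number of unsatisfied checks. For an error position $i$ (i.e. $e_i \neq 0$) I would look at its $\ell$ incident checks: $z_i$ of them are satisfied (these are exactly the zero messages), the remaining $\ell - z_i$ are unsatisfied and each proposes a nonzero correction, and $a_i$ is the largest number of them proposing one common value $v$. A three-case analysis of the incident checks (those voting $v$, the satisfied ones, and the rest) shows that replacing $r_i$ by $v$ turns the $a_i$ agreeing checks from unsatisfied to satisfied and the $z_i$ satisfied checks to unsatisfied while leaving the others unsatisfied, so the syndrome weight changes by exactly $-(a_i - z_i)$. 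Thus it suffices to exhibit one error position with $a_i - z_i \geq \theta + 1$.

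The core of the argument is a double count. I would lower-bound $a_i$ by the number $b_i$ of incident checks voting for the \emph{true} value $e_i$; a check votes $e_i$ precisely when the other errors it sees cancel, which in particular holds for every check whose only error is $i$, so $a_i \geq b_i$. Summing $b_i - z_i$ over the $W$ error positions and regrouping by checks, with $w_j$ denoting the number of errors in check $j$: each single-error check is unsatisfied and contributes $1$ to $\sum_i b_i$, each multi-error unsatisfied check contributes a nonnegative amount, and $\sum_i z_i = \sum_{\text{satisfied } j} w_j$. Eliminating the satisfied multi-error checks through the incidence identity $\sum_j w_j = W\ell$ and using $w_j \geq 2$ for every multi-error check, I expect the negative $z_i$-terms to be exactly absorbed, giving
\[
\sum_{i\,:\,e_i \neq 0}(a_i - z_i) \;\geq\; \sum_{i\,:\,e_i \neq 0}(b_i - z_i) \;\geq\; 2|\mathbf{S}| - W\ell .
\]

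Substituting the hypothesis $|\mathbf{S}| > W(\ell+\theta)/2$ then yields $\sum_{i}(a_i - z_i) > W\theta$, so the average over the $W$ error positions exceeds $\theta$ and some error position $i^\star$ satisfies $a_{i^\star} - z_{i^\star} \geq \theta + 1$; replacing $r_{i^\star}$ by its plurality value reduces $|\mathbf{S}|$ by $a_{i^\star} - z_{i^\star} \geq \theta + 1$, and setting $\theta = 0$ recovers Lemma~\ref{lemma::replace}. The main obstacle is exactly the non-binary step: unlike in bit-flipping, the unsatisfied checks around a symbol need not agree, so $a_i$ can be strictly smaller than the number of unsatisfied incident checks, and a multi-error unsatisfied check may vote a value different from $e_i$. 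The delicate point is therefore to verify that the double count still closes — that these potentially unhelpful multi-error checks are paid for by the incidence budget $W\ell$ (each consuming at least two incidences), so that they cannot at the same time inflate $|\mathbf{S}|$ and starve the plurality at every symbol.
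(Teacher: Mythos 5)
Your proposal is correct and follows essentially the same route as the paper: lower-bounding the plurality count at each erroneous symbol by the number of incident single-error (hence truth-voting) checks, counting edges in the error subgraph to get $\sum_i (a_i - z_i) \geq 2|\mathbf{S}| - W\ell$, and averaging over the $W$ error positions to extract a symbol with $a_i - z_i \geq \theta + 1$. Your $b_i$ and $z_i$ are exactly the paper's $e^{(i)}_{A_1}$ and $e^{(i)}_{C}$, and the incidence identity you use is the paper's inequality $W\ell \geq a_1 + 2(|\mathbf{S}| - a_1) + \sum_i e^{(i)}_C$.
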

\begin{proof}
Consider a subgraph of the Tanner graph that contains only erroneous symbols (the number of errors is equal to $W$) and constituent codes connected to these symbols. Within the proof we work with this subgraph only.

Let us introduce the following notation:
\begin{itemize}
\item $A$ is the set of codes that detect an error ($|A| = |\mathbf{S}|$);
\item $A_i$, $i = 1,\ldots,n_0$, is the subset of  $A$ containing only the codes with precisely $i$ incoming edges ($a_i = |A_i|$);
\item $A_{\geq 2} = A \backslash  A_1$ is a subset of  $A$ containing only the codes with at least $2$ incoming edges ($a_{\geq 2}  = |A_{\geq 2}|$);
\item $C$ is the set of codes that contain errors but do not detect them ($c = |C|$);
\item $e^{(i)}_{A_1}$ is the number of edges outgoing from a symbol $i$ and incoming to $A_1$;
\item $e^{(i)}_C$ is the number of edges outgoing from a symbol $i$ and incoming to $C$.
\end{itemize}

\begin{figure}[htbp]
\centering
\includegraphics[width=0.4\textwidth]{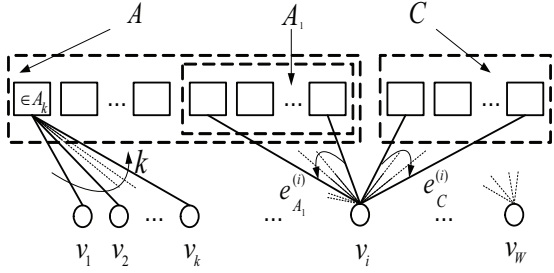}
\caption{A subgraph of Tanner graph}
\label{fig::tanner_sub}
\end{figure}

In \Fig{fig::tanner_sub} we present an example of a subgraph of the Tanner graph and illustrate the introduced notation.

First note, that if the condition 
\[
e^{(i)}_{A_1} > e^{(i)}_C + \theta
\]
holds for the $i$-th symbol, then the replacement of it will lead to the syndrome weight reduction by at least by $\theta + 1$. To prove this it is sufficient to mention that the codes with the only error will give equal messages.

Then we claim that if
\[
a_1 > \sum\limits_{i=1}^{W} e^{(i)}_C + W \theta,
\]
then there exist a symbol $i$ such that $e^{(i)}_{A_1} > e^{(i)}_C + \theta$.

And to finish the proof we need to count the edges in the subgraph. The number of edges outgoing from $W$ erroneous symbols is $W \ell$. These
edges can come to either codes that have detected an error ($A = A_1 \cup A_{\geq 2}$) or to codes that have not detected errors but contain them ($C$). Let us estimate the number of edges incoming to each of the three sets of codes:
\begin{itemize}
\item The number of edges leading to codes of the set $A_1$ is $\sum\limits_{i=1}^{W}e^{(i)}_{A_1} = a_1$;
\item The number of edges leading to codes of the set $A_{\geq 2}$ is at least $2(|\mathbf{S}| - a_1)$ (here we use the fact every code has at least two incoming edges);
\item The number of edges leading to codes of the set $C$ is $\sum\limits_{i=1}^{W}e^{(i)}_{C}$.
\end{itemize}

Thus
\[
W \ell \geq a_1 + 2(|\mathbf{S}| - a_1) + \sum\limits_{i=1}^{W}e^{(i)}_{C}.
\]

After some transformations, we have
\[
a_1 - \sum\limits_{i=1}^{W} e^{(i)}_C \geq 2 |\mathbf{S}| - W \ell.
\]

This immediately implies that if the condition of the Lemma holds then 
\[
a_1 > \sum\limits_{i=1}^{W} e^{(i)}_C + W \theta.
\]
\end{proof}

\begin{theorem}[Multiple thresholds]
Let $\mathcal{C}^*$ be an LDPC code over $\F_q$, satisfying ($\ref{eq::syndrome}$). Let $0 = \theta_1 < \theta_2 < \ldots < \theta_t < \ell$ be a sequence of thresholds. If the number of errors in the received sequence 
\[
W \leq W_{t+1},
\]
where
\[
W_{i} = W_{i-1} \frac{\ell + 3 \theta_{i-1} + 2}{\ell + 2 \theta_{i-1} + \theta_{i} + 2}, \: W_1 = W^*, \theta_{t+1} = \ell,
\]
the Algorithm~$2$ will correct all the errors with complexity $O(N \log N)$.
\end{theorem}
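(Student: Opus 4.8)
The plan is to reuse the ``syndrome weight -- number of errors'' picture of the single-threshold case, writing $W$ for the current number of errors and $|\mathbf{S}|$ for the current syndrome weight, and to add to the lines $L(W)=W\ell/2$ and $U(W)=W\ell$ the family of rays through the origin $P(\theta_j,W)=W(\ell+\theta_j)/2$, $j=1,\dots,t+1$. Two of these rays are already familiar: $P(\theta_1,\cdot)=L$ and, because $\theta_{t+1}=\ell$, $P(\theta_{t+1},\cdot)=U$. Since both (\ref{eq::syndrome}) and the preceding lemma are available only while $W\le W^*$, the entire theorem reduces to showing that no trajectory produced by Algorithm~2 ever leaves the strip $W\le W^*$; once that is granted, the final pass ($\theta_1=0$) behaves exactly as in Theorem~2 and brings the syndrome, hence the point, to the origin.

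First I would analyse a single pass of Algorithm~1 with a fixed threshold $\theta_j$. While $|\mathbf{S}|>P(\theta_j,W)$ the preceding lemma supplies a symbol whose replacement lowers $|\mathbf{S}|$ by at least $\theta_j+1$, and every replacement changes $W$ by $\pm1$; thus the pass can terminate only once $|\mathbf{S}|\le P(\theta_j,W)$. The engine of the argument is the potential $\Phi_j=|\mathbf{S}|+(\theta_j+1)W$: one replacement gives $\Delta\Phi_j=\Delta|\mathbf{S}|+(\theta_j+1)\Delta W\le-(\theta_j+1)+(\theta_j+1)=0$ when $\Delta W=+1$ and an even smaller value when $\Delta W=-1$, so $\Phi_j$ never exceeds its value $K_j$ at the start of the pass. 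Combining $\Phi_j\le K_j$ with $|\mathbf{S}|>P(\theta_j,W)$ yields $W<2K_j/(\ell+3\theta_j+2)$ throughout, i.e.\ the rightmost the walk can reach in this pass is the meeting point of the level line $\Phi_j=K_j$ (slope $-(\theta_j+1)$) with the ray $P(\theta_j,\cdot)$.

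Next I would chain the passes. The start of the $\theta_j$-pass is the end of the $\theta_{j+1}$-pass and therefore lies on $P(\theta_{j+1},\cdot)$, while the whole run starts on $U=P(\theta_{t+1},\cdot)$ at worst-case initial syndrome $U(W)$. Writing $W_j$ for the error count at the end of the $\theta_j$-pass (on $P(\theta_j,\cdot)$), substituting $K_j=W_{j+1}(\ell+\theta_{j+1})/2+(\theta_j+1)W_{j+1}$ into the intersection above and clearing denominators gives exactly $W_j(\ell+3\theta_j+2)=W_{j+1}(\ell+2\theta_j+\theta_{j+1}+2)$, which is the recursion of the statement. The largest error count ever attained is $W_1$, reached at the end of the $\theta_1=0$ pass where its level line meets $L$; fixing $W_1=W^*$ and solving the recursion upward determines the critical initial count $W_{t+1}$. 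Because every ray passes through the origin the whole configuration is self-similar, so replacing the initial count by any $W\le W_{t+1}$ only scales the trajectory down and keeps the maximal error $\le W^*$; this is the condition of the theorem. The complexity estimate $O(N\log N)$ is inherited from Algorithm~1 through the remark preceding the theorem, $t$ being a constant.

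The two places that demand care are the following. The geometric envelope is continuous, whereas the actual decoding is a discrete walk that may move left (errors removed) and may drop $|\mathbf{S}|$ by more than $\theta_j+1$; monotonicity of $\Phi_j$ is precisely what keeps this walk weakly to the left of the envelope, so I would write that step out in full. The second, subtler point is the apparent circularity: the preceding lemma and (\ref{eq::syndrome}) presuppose $W\le W^*$, which is what is being proved. I would break the circle by a first-violation argument — consider the first replacement after which $W>W^*$; up to that instant $W\le W^*$, the hypotheses hold, and the envelope bound forces $W\le W_1=W^*$, a contradiction — thereby certifying that $W\le W^*$ is preserved for the entire execution of Algorithm~2.
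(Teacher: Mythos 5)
Your overall architecture is exactly the paper's: the ``syndrome weight -- number of errors'' plane, the rays $P(\theta_j,\cdot)$, per-pass segments of slope $-(\theta_j+1)$, and the recursion obtained by intersecting consecutive segments with consecutive rays. The paper's own proof is only a two-sentence sketch of this picture, so your potential function $\Phi_j=|\mathbf{S}|+(\theta_j+1)W$ and the first-violation device are welcome additions that make the discrete walk rigorous. The recursion you derive from $K_j$ is the correct one, and your observation that the chaining only needs the end of a pass to lie \emph{on or below} $P(\theta_{j+1},\cdot)$ is sound.

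There is, however, one step that does not hold as written: ``Combining $\Phi_j\le K_j$ with $|\mathbf{S}|>P(\theta_j,W)$ yields $W<2K_j/(\ell+3\theta_j+2)$ \emph{throughout}.'' The lemma gives a \emph{sufficient} condition for a replaceable symbol to exist, not a necessary one: Algorithm~1 flips any symbol with $a-z>\theta_j$, and such symbols can persist after the trajectory has dropped below the ray $P(\theta_j,\cdot)$. Hence $|\mathbf{S}|>P(\theta_j,W)$ is guaranteed only at termination to fail, not to hold at every intermediate state, and the walk may continue drifting right below that ray. The only floor valid at \emph{every} state (while $W\le W^*$) is $L(W)=P(0,W)$ from (\ref{eq::syndrome}), and intersecting the level line $\Phi_j=K_j$ with $L$ instead of with $P(\theta_j,\cdot)$ gives the weaker intra-pass bound $W< W_j(\ell+3\theta_j+2)/(\ell+2\theta_j+2)$, which exceeds $W_j$ for $j\ge 2$. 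This does not sink the theorem --- the chaining of the $K_j$'s survives because the two constraints $\Phi_{j+1}\le K_{j+1}$ and $|\mathbf{S}|\le P(\theta_{j+1},W)$ at the end of a pass still cap $K_j$ at the intersection value --- but your first-violation argument must compare the \emph{$L$-floor} excursion, not the $P(\theta_j,\cdot)$ one, against $W^*$. Concretely, you need the extra verification that
\[
W_j\,\frac{\ell+3\theta_j+2}{\ell+2\theta_j+2}\le W_1=W^* \quad\text{for all } j,
\]
which does hold (writing $a_i=\ell+3\theta_i+2$, each denominator in the product defining $W_j$ equals $(2a_{i-1}+a_i)/3\ge a_{i-1}^{2/3}a_i^{1/3}$ by AM--GM, and the product telescopes to at most $1$), but it is a genuine additional lemma rather than a consequence of the envelope picture you drew.
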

\begin{proof}
The area of correctable error vectors is shown in \Fig{fig::ldpc_thr_m}. For now the area is more difficult because the slope at threshold $\theta_i$ is equal to $\theta_i +1$. To prove the Theorem we need to consequently calculate coordinates of intersection of the area bound and lines $P(\theta_i, W)$.

\begin{figure}[htbp]
\centering
\includegraphics[width=0.4\textwidth]{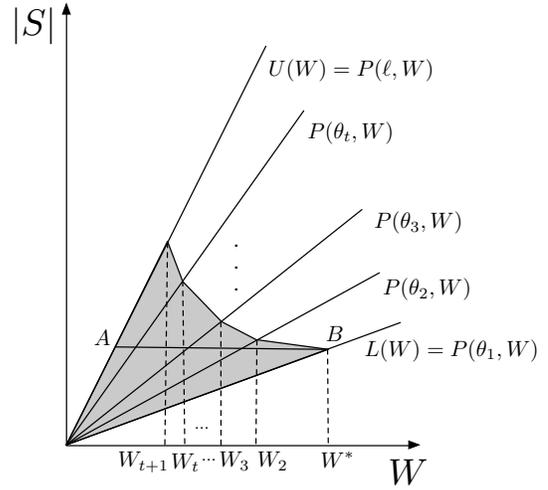}
\caption{Multiple thresholds}
\label{fig::ldpc_thr_m}
\end{figure}

\end{proof}

The most interesting case for us is the case when we have all the thresholds from $0$ to $\ell-1$. In this case
\[
W^{(M)} = \prod\limits_{i=0}^{\ell-1} \frac{\ell+3i+2}{\ell+3i+3} W^*.
\]
Let us introduce a notation
\[
\alpha^{(M)} = \prod\limits_{i=0}^{\ell-1} \frac{\ell+3i+2}{\ell+3i+3}
\]
and consider the asymptotic ($N \to \infty$) estimate of the relative decoding radius realized by Algorithm~$2$ (when we have all the thresholds). We have
\[
\rho^{(M)} \geq \frac{W^{(M)}}{N} = \alpha^{(M)} \omega^*.
\]

In \Fig{fig::alpha_m} the comparison of $\alpha^{(S)}$ and $\alpha^{(M)}$ is shown.

\begin{figure}[htbp]
\centering
\includegraphics[width=0.35\textwidth]{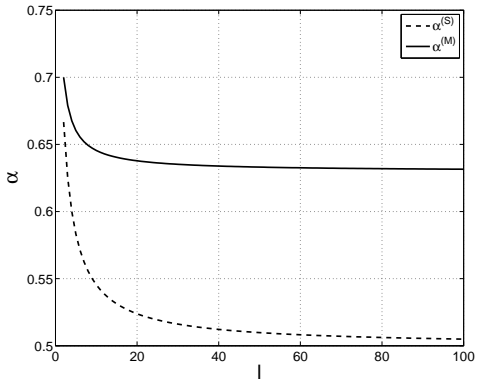}
\caption{The dependency of $\alpha^{(S)}$ and $\alpha^{(M)}$ on $\ell$}
\label{fig::alpha_m}
\end{figure}

At last let us calculate the value of $\alpha^{(M)}$ when $\ell$ is big. It is easy to check, that
\[
\lim\limits_{\ell \to \infty} \alpha^{(M)} = 2^{-2/3} = 0.6300...
\]

\begin{remark}[Generalized LDPC codes]\label{remark::gen_ldpc}
Here we briefly consider the case of generalized LDPC codes, i.e. the case when the constituent codes are not SPC codes but some more powerful codes. All our theorems work in this case if we use the so-called generalized syndrome rather then an ordinary syndrome. Generalized syndrome consists of syndromes of constituent codes. The weight of generalized syndrome is just the number of unsatisfied constituent codes. We would like to point out, that analogously to \cite{FZ} the transition to generalized LDPC codes does not lead to a gain in the decoding radius. 
\end{remark}

\section{Numerical results}

The numerical results are given in Table~\ref{t_16} for $q=16$ and Table~\ref{t_64} for $q=64$. In each Table the dependencies of $\omega^*$, $\rho^{(S)}$ and $\rho^{(M)}$ on the code rate $R$ are presented. Note, that $\ell$ (in each case) is chosen to maximize the functions. For our case the maximal values of $\omega^*$, $\rho^{(S)}$ and $\rho^{(M)}$ were achieved for the same $\ell$, the value of $\ell$ is also given in the Tables.
   

\begin{table}[!t]
\caption{Results for $q=16$}
\label{t_16}
\centering
\begin{tabular}{| l | c | c | c |}
  \hline                       
  $R$; $\ell$ &  $\omega^*$ & $\rho^{(S)}$ & $\rho^{(M)}$\\
  \hline
  0.125; 45 & 0.0103 & 0.0053 & 0.0065 \\
  0.25; 43   & 0.0095 & 0.0049 & 0.0060 \\
  0.375; 40 & 0.0085 & 0.0044 & 0.0054 \\
  0.5; 31     & 0.0072 & 0.0037 & 0.0046 \\
  0.625; 24 & 0.0053 & 0.0028 & 0.0034 \\
  0.75; 24   & 0.0033 & 0.0017 & 0.0021 \\
   0.875; 26 & 0.0015 & 0.0008 & 0.0010 \\
  \hline  
\end{tabular}
\end{table}

\begin{table}[!t]
\caption{Results for $q=64$}
\label{t_64}
\centering
\begin{tabular}{| l | c | c | c |}
  \hline                       
  $R$; $\ell$ &  $\omega^*$ & $\rho^{(S)}$ & $\rho^{(M)}$\\
  \hline
  0.125; 21 & 0.0156 & 0.0082 & 0.0099 \\
  0.25; 24   & 0.0131 & 0.0068 & 0.0083 \\
  0.375; 20 & 0.0104 & 0.0054 & 0.0066 \\
  0.5; 22     & 0.0081 & 0.0042 & 0.0052 \\
  0.625; 27 & 0.0059 & 0.0031 & 0.0038 \\
  0.75; 24   & 0.0037 & 0.0019 & 0.0024 \\
   0.875; 26 & 0.0017 & 0.0009 & 0.0011 \\
  \hline  
\end{tabular}
\end{table}

\newpage
We note, that the value of $\rho^{(M)}/\rho^{(S)} \geq 1.21$ for all the rates we considered. So transition to multiple thresholds leads to the gain in the decoding radius without affecting the order of complexity. To the best knowledge of the authors the obtained estimates are currently the best estimates of the decoding radius for low-complexity majority decoder of LDPC codes over $\F_q$.

\section{Conclusion}
We improved the estimate on the relative decoding radius $\rho$ for the single threshold majority decoder of LDPC codes over $\F_q$. The majority decoding algorithm with multiple thresholds is suggested. A lower estimate on the decoding radius realized by the new algorithm is derived. The estimate is shown to be at least $1.21$ times better than the estimate for a single threshold majority decoder. At the same time analogously the result from \cite{K} the transition to multiple thresholds does not affect the order of complexity.

All the results are obtained for the case when the constituent codes are SPC codes over $\F_q$. The case of more powerful constituent codes is considered. It is shown that analogously to \cite{FZ} the transition to generalized LDPC codes does not lead to a gain in the decoding radius. 

To the best knowledge of the authors the obtained estimates are currently the best estimates of the decoding radius for low-complexity majority decoder of LDPC codes over $\F_q$.

\section*{Acknowledgment}
This work was partially supported by Russian Science Foundation grant 14-50-00150.

\end{document}